\documentclass[11pt]{article}
\usepackage{amsmath, amssymb, amsthm}
\usepackage{graphicx}
\usepackage{amsmath}
\usepackage{epsfig}
\usepackage{url}
\usepackage{algorithmic}
\usepackage{fullpage}

\newtheorem{theorem}{{Theorem}}
\newtheorem{property}{{Property}}
\newtheorem{lemma}{{Lemma}}

\title{A memory-efficient data structure representing exact-match overlap
graphs with application for next generation DNA assembly
\thanks{This work has been supported in
part by the following grants: NSF 0326155, NSF 0829916 and NIH
1R01GM079689-01A1.}}

\author{Hieu Dinh\\
University of Connecticut\\
\textsf{hdinh@engr.uconn.edu} \and
Sanguthevar Rajasekaran\\
University of Connecticut\\
\textsf{rajasek@engr.uconn.edu} }


\begin{document}

\maketitle              
\thispagestyle{empty}

\begin{abstract}

The maximal exact-match overlap of two strings $x$ and $y$, denoted
by $ov_{max}(x,y)$, is the longest string which is a suffix of $x$
and a prefix of $y$. The exact-match overlap graph of $n$ given
strings of length $\ell$ is an edge-weighted graph in which each
vertex is associated with a string and there is an edge $(x,y)$ of
weight $\omega = \ell - |ov_{max}(x,y)|$ if and only if $\omega \leq
\lambda$, where $|ov_{max}(x,y)|$ is the length of $ov_{max}(x,y)$
and $\lambda$ is a given threshold. In this paper, we show that the
exact-match overlap graphs can be represented by a compact data
structure that can be stored using at most $(2\lambda -1
)(2\lceil\log n\rceil + \lceil\log\lambda\rceil)n$ bits with a
guarantee that the basic operation of accessing an edge takes
$O(\log \lambda)$ time. We also propose two algorithms for
constructing the data structure for the exact-match overlap graph.
The first algorithm runs in $O(\lambda\ell n\log n)$ worse-case time
and requires $O(\lambda)$ extra memory. The second one runs in
$O(\lambda\ell n)$ time and requires $O(n)$ extra memory.

Exact-match overlap graphs have been broadly used in the context of
DNA assembly and the \emph{shortest super string problem} where the
number of strings $n$ ranges from a couple of thousands to a couple
of billions, the length $\ell$ of the strings is from 25 to 1000,
depending on DNA sequencing technologies. However, many DNA
assemblers using overlap graphs are facing a major problem of
constructing and storing them. Especially, it is impossible for
these DNA assemblers to handle the huge amount of data produced by
the next generation sequencing technologies where the number of
strings $n$ is usually very large ranging from hundred million to a
couple of billions. If a graph is explicitly stored, it would
require $\Omega(n^2)$ memory, which is impossible in practice in the
case that $n$ is greater than hundred million. In fact, to our best
knowledge there is no DNA assemblers that can handle such a large
number of strings. Fortunately, with our compact data structure, the
major problem of constructing and storing overlap graphs is
practically solved since it only requires linear time and and linear
memory. As a result, it opens the door of possibilities to build a
DNA assembler that can handle large-scale datasets efficiently.

\end{abstract}

\newpage
\pagenumbering{arabic}

\section{Introduction} \label{sec:intro}

An exact-match overlap graph of $n$ given strings of length $\ell$
is an edge-weighted graph defined informally as follows. Each vertex
is associated with a string and there is an edge $(x,y)$ of weight
$\omega = \ell - |ov_{max}(x,y)|$ if and only if $\omega \leq
\lambda$, where $\lambda$ is a given threshold and $|ov_{max}(x,y)|$
is the length of the maximal exact-match overlap of two strings $x$
and $y$. The formal definition of the exact-match overlap graph is
given in Section \ref{sec:prelim}.

Storing the exact-match overlap graphs efficiently in term of memory
becomes essential when the number of strings is very large. In the
literature, there are two common data structures to store a general
graph $G=(V,E)$. The first data structure uses a two-dimensional
array of size $|V|\times |V|$. We call it array-based data
structure. One of its advantages is that the time of accessing a
given edge is $O(1)$. However, it requires $\Omega(|V|^2)$ memory.
The second data structure stores the set of edges $E$. We call it
edge-based data structure. Of course, it requires $\Omega(|V|+|E|)$
memory and the time of accessing a given edge is
$O(\log\Delta)$, where $\Delta$ is the degree of the graph. Both of
these data structures require $\Omega(|E|)$ memory. If the
exact-match overlap graphs are stored by these two data structures,
we will need  $\Omega(|E|)$ memory. Even this much of memory may not
be feasible in the case that the number of strings is over hundred
millions. In this paper we focus on data structures for the
exact-match overlap graphs that will need much less memory than
$|E|$.

\subsection{Our contributions}
We show that there is a compact data structure representing the
exact-match overlap graph that needs much less memory than $|E|$ with a
guarantee that the basic operation of accessing an edge takes
$O(\log \lambda)$ time, which is almost a constant in the context of
DNA assembly. The data structure can be constructed efficiently in
time and memory as well. In particular, we show that
\begin{itemize}
  \item The data structure takes no more than $(2\lambda -1 )(2\lceil\log n\rceil +
  \lceil\log\lambda\rceil)n$ bits.
  \item The data structure can be constructed in $O(\lambda\ell n)$ time.
\end{itemize}

As a result, any algorithm using overlap graphs can be simulated by
our compact data structure with no more $(2\lambda -1 )(2\lceil\log
n\rceil +  \lceil\log\lambda\rceil)n$ bits for storing the overlap
graph and paying extra $O(\log\lambda)$ time factor overhead.
Apparently, if $\lambda$ is a constant or much much smaller than
$n$, our data structure will be a perfect solution for any
application that does not have enough memory for storing the overlap
graph in traditional way.

Our claim may sound contradictory because in some exact-match
overlap graphs the number of edges can be $\Omega(n^2)$ and it seems
like it should require at least $\Omega(n^2)$ time and memory to
construct them. Fortunately, because of some special properties of
the exact-match overlap graphs, we can construct and store them
efficiently. In Section \ref{sec:data-struct-ov-graph}, we will
describe these special properties in detail.

Briefly, the idea of storing the overlap graph compactly is from the
following simple observation. If the strings are sorted in the
lexicographic order, then for any string $x$ the lexicographic
orders of the strings that contain $x$ as a prefix are in a certain
integer range or integer interval $[a,b]$. Therefore, the
information about out-neighborhood of a vertex can be described by
at most $\lambda$ intervals. Such intervals have a nice property
that they are either disjoint or contain each other. This property
allows us to describe the out-neighborhood of a vertex by at most
$2\lambda - 1$ disjoint intervals. Each interval costs $2\lceil\log
n\rceil + \lceil\log\lambda\rceil$ bits, where $2\lceil\log n\rceil$
bits are for storing its two bounds and $\lceil\log\lambda\rceil$
bits are for storing the weight. We have $n$ vertices so the amount
of memory required by our data structure is no more than $(2\lambda
-1 )(2\lceil\log n\rceil + \lceil\log\lambda\rceil)n$ bits. Note
that this is just an upper bound. In practice, the amount of memory
may be much less than that.

\subsection{Application: DNA assembly}
The main motivation for the exact-match overlap graphs comes from
their use in implementing fast approximation algorithms for the
\emph{shortest super string problem} which is the very first problem
formulation for DNA assembly. The exact-match overlap graphs can be
used for other problem formulations for DNA assembly as well.

Exact-match overlap graphs have been broadly used in the context of
DNA assembly and the \emph{shortest super string problem} where the
number of strings $n$ ranges from a couple of thousands to a couple
of billions, the length $\ell$ of the strings is from 25 to 1000,
depending on DNA sequencing technologies. However, many DNA
assemblers using overlap graphs are facing a major problem of
constructing and storing them. Especially, it is impossible for
these DNA assemblers to handle the huge amount of data produced by
the next generation sequencing technologies where the number of
strings $n$ is usually very large ranging from hundred million to a
couple of billions. If a graph is explicitly stored, it would
require $\Omega(n^2)$ memory, which is impossible in practice in the
case that $n$ is greater than hundred million. In fact, to our best
knowledge there is no DNA assemblers that can handle such a large
number of strings. Fortunately, with our compact data structure, the
major problem of constructing and storing overlap graphs is
practically solved since it only requires linear time and linear
memory. As a result, it opens the door of possibilities to build a
DNA assembler that can handle large-scale datasets efficiently.

\subsection{Related work} \label{sec:related-work}
Gusfield et al. \cite{DanGusfield-All-Pairs1992},
\cite{DanGusfield-Book1997} consider the\emph{ all-pairs
suffix-prefix problem} which is actually a special case of computing
the exact-match overlap graphs when $\lambda = \ell$. They devised
an $O(\ell n + n^2)$ time algorithm for solving the all-pairs
suffix-prefix problem. In this case, the exact-match overlap graph
is a complete graph. So the run time of the algorithm is optimal if
the exact-match overlap graph is stored in the common way.

Although the run time of the algorithm by Gusfield et al. is
theoretically optimal in that setting, it uses the generalized
suffix tree which has two disadvantages in practice. The first
disadvantage is that the space consumption of the suffix tree is
quite large \cite{Kurtz-Suffix-Trees1999}. The second disadvantage
is that the suffix tree usually suffers from a poor locality of
memory references \cite{Ohlebusch-All-Pairs2010}. Fortunately,
Abouelhoda et al. \cite{Abouelhoda-Replace-Suffix-Trees2004}
proposed a suffix tree simulation framework that allows any
algorithm using the suffix tree to be simulated by enhanced suffix
arrays. Ohlebusch and Gog \cite{Ohlebusch-All-Pairs2010} made use of
properties of the enhanced suffix arrays to devise an algorithm for
solving the all-pairs suffix-prefix problem directly without using
the suffix tree simulation framework. The run time of the algorithm
by Ohlebusch and Gog is also $O(\ell n + n^2)$. Please note that our
data structure and algorithm can be used to solve the suffix-prefix
problem in $O(\lambda \ell n)$ time. In the context of DNA assembly,
$\lambda$ is typically much smaller than $n$ and hence our algorithm
will be faster than the algorithms of \cite{DanGusfield-Book1997}
and \cite{Ohlebusch-All-Pairs2010}.

In the literature, exact-match overlap graphs should be
distinguished from approximate-match overlap graphs which is
considered in \cite{GeneMyers-StringGraph2005},
\cite{Medvedev07computabilityof},
\cite{MihaiPop-AssemblySurvey2009}. In the approximate-match overlap
graph, there is an edge between two strings $x$ and $y$ if and only
if there is a prefix of $x$, say $x'$, and there is a suffix of $y$,
say $y'$, such that the edit distance between $x'$ and $y'$ is no
more than a certain threshold.

\section{Preliminaries} \label{sec:prelim}
Let $\Sigma$ be the alphabet. The size of $\Sigma$ is a constant. In
the context of DNA assembly, $\Sigma = \{A, C, G, T\}$. The length
of a string $x$ on $\Sigma$, denoted by $|x|$, is the number of
symbols in $x$. Let $x[i]$ be the $i$-th symbol of string $x$, and
$x[i,j]$ be the substring of $x$ between the $i$-th and the $j$
positions. A prefix of string $x$ is the substring $x[1,i]$ for some
$i$. A suffix of string $x$ is the substring $x[i,|x|]$ for some
$i$.

Given two strings $x$ and $y$ on $\Sigma$, an \emph{exact-match
overlap} between $x$ and $y$, denoted by $ov(x,y)$, is a string
which is a suffix of $x$ and a prefix of $y$ (notice that this
definition is not symmetric). The maximal exact-match overlap
between $x$ and $y$, denoted by $ov_{max}(x,y)$, is the longest
exact-match overlap between $x$ and $y$.

\textbf{Exact-match overlap graphs:} Given $n$ strings
$s_1,s_2,\dots,s_n$ and a threshold $\lambda$, the exact-match
overlap graph is an edge-weighted directed graph $G=(V,E)$ in which
there is a vertex $v_i \in V$ associated with the string $s_i$, for $1\leq i\leq n$. There
is an edge $(v_i,v_j) \in E$ if and only if $|s_i| -
|ov_{max}(s_i,s_j)| \leq \lambda$. The weight of the edge
$(v_i,v_j)$, denoted by $\omega(v_i,v_j)$, is $|s_i| -
|ov_{max}(s_i,s_j)|$.

\begin{figure}[h]
\begin{center}
  \includegraphics[width=3in]{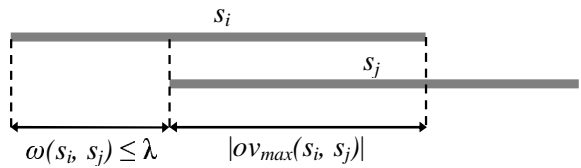}\\
  \caption{An example of an overlap edge.}\label{fig:overlap-edge}
\end{center}
\end{figure}

The set of out-neighbors of a vertex $v$ is denoted by
$OutNeigh(v)$. The size of the set of out-neighbors of $v$,
$|OutNeigh(v)|$, is called the out-degree of $v$. We denote the out-degree of
$v$ as $deg_{out}(v) = |OutNeigh(v)|$.

For simplicity, we assume that all the strings $s_1,s_2,\dots,s_n$
have the same length $\ell$. Otherwise, let $\ell$ be the length of
the longest string and all else works.

\textbf{The operation of accessing an edge given its two endpoints:}
Given any two vertices $v_i$ and $v_j$, the operation of accessing
the edge $(v_i,v_j)$ is the task of returning $\omega(v_i,v_j)$ if
$(v_i,v_j)$ is actually an edge of the graph, and returning $NULL$
if $(v_i,v_j)$ is not.

\section{A memory-efficient data structure representing an exact-match overlap graph}
\label{sec:data-struct-ov-graph}  In this section, we describe a
memory-efficient data structure to store an exact-match overlap
graph. It only requires at most $(2\lambda -1 )(2\lceil\log n\rceil
+ \lceil\log\lambda\rceil)n$ bits. It guarantees that the
time for accessing an edge, given two endpoints of the edge, is
$O(\log \lambda)$. This may sound like a contradictory claim because in some
exact-match overlap graphs the number of edges can be $\Omega(n^2)$ and it seems like
it should require at least $\Omega(n^2)$ time and space to construct
them. Fortunately, because of some special properties of the
exact-match overlap graphs, we can construct and store them
efficiently. In the following paragraphs, we will describe these
special properties.

Without loss of generality, we assume that the $n$ input strings
$s_1,s_2,\dots,s_n$ are sorted in lexicographic order. We can assume
this because if they are not sorted, we can sort them by using the
radix sort algorithm which runs in $O(\ell n)$ time. The algorithm
radix sort takes $O(\ell n)$ time in this case because we consider
the constant alphabet size. Otherwise, it would take additional
$O(|\Sigma|\log(|\Sigma|))$ time to sort the alphabet.

Each string $s_i$ and its corresponding vertex $v_i$ in the
exact-match overlap graph are determined by the string's
lexicographic order $i$. We refer to the lexicographic order of any
string as its \emph{identification number}. We will access an input
string and its vertex through its identification number. Therefore,
the identification number and the vertex of an input string are used
interchangeably. Also, it is not hard to see that we need $\lceil
\log n \rceil$ bits to store an identification number. We have the
following properties.

Given an arbitrary string $x$, let $PREFIX(x)$ be the set of
identification numbers such that $x$ is a prefix of their
corresponding input strings. Formally, $PREFIX(x) = \{i | x$ is a
prefix of $s_i\}$.

\begin{property} If $PREFIX(x)
\neq \emptyset$, then $PREFIX(x) = [a,b]$, where $[a,b]$ is some
integer interval containing integers $a, a+1,\dots,b-1,b$.
\label{property:interval}
\end{property}

\begin{proof} Let $a = \min_{i \in PREFIX(x)}i$ and $b = \max_{i \in
PREFIX(x)}i$. Clearly, $PREFIX(x) \subseteq [a,b]$. On the other
hand, we will show that $[a,b] \subseteq PREFIX(x)$. Let $i$ be any
identification number in the interval $[a,b]$. Since the input
strings are in lexicographically sorted order, $s_a[1,|x|] \leq
s_i[1,|x|] \leq s_b[1,|x|]$. Since $a \in PREFIX(x)$ and $b \in
PREFIX(x)$, $s_a[1,|x|] = s_b[1,|x|]$. Thus, $s_a[1,|x|] =
s_i[1,|x|] = s_b[1,|x|]$. Therefore, $x$ is a prefix of $s_i$.
Hence, $i \in PREFIX(x)$.
\end{proof}

For example, let
\begin{eqnarray}
s_1 & = & AAACCGGGGTTT \nonumber \\
s_2 & = & ACCCGAATTTGT \nonumber \\
s_3 & = & ACCCTGTGGTAT \nonumber \\
s_4 & = & ACCGGCTTTCCA \nonumber \\
s_5 & = & ACTAAGGAATTT \nonumber \\
s_6 & = & TGGCCGAAGAAG \nonumber
\end{eqnarray}

If $x=AC$, then $PREFIX(x) = [2,5]$. Similarly, if $x=ACCC$, then
$PREFIX(x) = [2,3]$.

Property \ref{property:interval} tells us that $PREFIX(x)$ can be
expressed by an interval which is determined by its lower bound and
its upper bound. So we only need $2\lceil \log n \rceil$ bits to
store $PREFIX(x)$. In the rest of this paper, we will
refer to $PREFIX(x)$ as an interval. Also, given an identification
number $i$, checking whether $i$ is in $PREFIX(x)$ can be done in
$O(1)$ time. In the subsection \ref{sec:comp-PREFIX}, we will
discuss two algorithms computing $PREFIX(x)$, for a given string
$x$. The run times of these algorithms are $O(|x| \log n)$ and $O(|x|)$, respectively.

Property \ref{property:interval} leads to the following property.

\begin{property} $OutNeigh(v_i) =
\bigcup_{1 \leq \omega \leq \lambda} PREFIX(s_i[\omega + 1,|s_i|])$
for each vertex $v_i$. In the other words, $OutNeigh(v_i)$ is the
union of at most $\lambda$ non-empty intervals.
\label{property:out-neigh-overlap}
\end{property}

\begin{proof} Let $v_j$ be a vertex in $OutNeigh(v_i)$. By
the definition of the exact-match overlap graph, $1 \leq |s_i| -
|ov_{max}(s_i,s_j)| = \omega(v_i,v_j) \leq \lambda$. Let
$\omega(s_i,s_j) = \omega.$ Therefore, $ov_{max}(s_i,s_j) =
s_i[\omega + 1,|s_i|] = s_j[1,|ov_{max}(s_i,s_j)|]$. This implies
$v_j \in PREFIX(s_i[\omega + 1,|s_i|])$.

On the other hand, let $v_j$ be any vertex in $PREFIX(s_i[\omega +
1,|s_i|])$, it is easy to check that $v_j \in OutNeigh(v_i)$. Hence,
$OutNeigh(v_i) = \bigcup_{1 \leq \omega \leq \lambda}
PREFIX(s_i[\omega + 1,|s_i|])$.
\end{proof}

From Property \ref{property:out-neigh-overlap}, it follows that we
can represent $OutNeigh(v_i)$ by at most $\lambda$ non-empty
intervals, which need at most $2\lambda \lceil \log n \rceil$ bits
to store. Therefore, it takes at most $2n\lambda \lceil \log n
\rceil$ bits to store the exact-match overlap graph. However, given
two vertices $v_i$ and $v_j$, it takes $O(\lambda)$ time to retrieve
$\omega(v_i,v_j)$ because we have to sequentially check if $v_j$ is
in $PREFIX(s_i[2,|s_i|])$,$PREFIX(s_i[3,|s_i|])$,$\dots$,\\
$PREFIX(s_i[\lambda+1,|s_i|])$. But if $OutNeigh(v_i)$ can be
represented by $k$ disjoint intervals then the task of retrieving
$\omega(v_i,v_j)$ can be done in $O(\log k)$ time by using binary
search. In Lemma \ref{theorem:number-of-intvl-01}, we show that
$OutNeigh(v_i)$ is a union of at most $2\lambda - 1$ disjoint
intervals.

\begin{property} For any two strings $x$ and $y$ with $|x| < |y|$,
then either one of the two following statements is true:
\begin{itemize}
  \item $PREFIX(y) \subseteq PREFIX(x)$
  \item $PREFIX(y) \bigcap PREFIX(x) = \emptyset$
\end{itemize}
\label{property:disjoint-or-containing-intvl}
\end{property}

\begin{proof} There are only two possible cases that can happen to $x$ and $y$.
\\\textbf{Case 1}: $x$ is a prefix of $y$. For this case, it is not
hard to infer that $PREFIX(y) \subseteq PREFIX(x)$.
\\\textbf{Case 2}: $x$ is not a prefix of $y$. For this case, it is not
hard to infer that $PREFIX(y) \bigcap PREFIX(x) = \emptyset$.
\end{proof}

\begin{lemma}Given ${\lambda}$ intervals $[a_1,b_1],[a_2,b_2]\dots [a_{\lambda},b_{\lambda}]$
satisfied Property \ref{property:disjoint-or-containing-intvl}, the
union of them is the union of at most $2{\lambda} - 1$ disjoint
intervals. Formally, there exist $p \leq 2{\lambda} - 1$ disjoint
intervals $[a'_1,b'_1],[a'_2,b'_2]\dots [a'_p,b'_p]$ such that
$\bigcup_{1 \leq i \leq {\lambda}} [a_{i}, b_{i}] = \bigcup_{1 \leq
i \leq p} [a'_{i}, b'_{i}]$. \label{lemma:number-of-intvl}
\end{lemma}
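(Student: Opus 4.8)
The plan is to first observe that Property~\ref{property:disjoint-or-containing-intvl} makes the given family \emph{laminar}: any two of the $\lambda$ intervals are either nested or disjoint. I would encode this structure as a forest under the containment relation, declaring $[a_i,b_i]$ to be the parent of $[a_j,b_j]$ exactly when $[a_j,b_j]\subsetneq[a_i,b_i]$ with no third interval strictly between them. Because the maximal intervals (the roots of this forest) are pairwise disjoint and every interval is contained in some root, the point-set union $\bigcup_i[a_i,b_i]$ is already the disjoint union of the at most $\lambda$ root intervals, and since $\lambda\le 2\lambda-1$ the stated bound follows immediately. This crude decomposition, however, throws away the nesting depth, which is precisely the information needed by the preceding discussion to recover $\omega(v_i,v_j)$ by binary search. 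So the substantive content—and the reason the natural count is $2\lambda-1$ rather than $\lambda$—is to produce $\le 2\lambda-1$ disjoint intervals on each of which the \emph{deepest} containing original interval, hence the associated weight, is constant.

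For this refined decomposition I would induct over the containment forest. Consider a node whose interval $I$ has $m$ children $J_1,\dots,J_m$; these are pairwise disjoint subintervals of $I$, so the remainder $I\setminus(J_1\cup\cdots\cup J_m)$ breaks into at most $m+1$ maximal gap subintervals, and on each gap the deepest covering original interval is $I$ itself, giving one well-defined weight. Writing $P(T)$ for the number of constant-weight pieces contributed by a subtree $T$ and assuming inductively $P(T_i)\le 2n_i-1$ for the child subtrees of sizes $n_i$, the node contributes $(m+1)+\sum_i P(T_i)\le (m+1)+\sum_i(2n_i-1)=1+2\sum_i n_i=2N-1$, where $N=1+\sum_i n_i$ is the size of the subtree rooted there (the leaf case $m=0$ gives $P=1=2\cdot1-1$). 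Summing over the $r\ge 1$ roots of the whole forest yields $\sum_j(2N_j-1)=2\lambda-r\le 2\lambda-1$, which is the claimed bound.

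The step I expect to be the crux is the telescoping bookkeeping: the ``$+1$'' contributed by each node's remainder must exactly absorb the ``$-1$'' carried by each child subtree, and one must check that the $\le m+1$ gaps really are maximal and each carry a single weight even when child intervals abut one another—handled by treating the $J_k$ as disjoint closed integer intervals and taking the integer gaps strictly between consecutive ones. I would also exhibit the fully nested chain $[a_1,b_1]\supsetneq\cdots\supsetneq[a_\lambda,b_\lambda]$ with endpoints strictly receding on both sides, which produces exactly $1+2(\lambda-1)=2\lambda-1$ constant-weight pieces; this shows the bound is tight and therefore cannot be improved once the weights must be preserved, justifying the use of $2\lambda-1$ in Lemma~\ref{lemma:number-of-intvl}.
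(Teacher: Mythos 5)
Your proposal is correct and follows essentially the same route as the paper: both build the laminar containment forest and decompose each node's interval into the gap pieces left after removing its children, arriving at the bound $2\lambda - (\text{number of roots}) \le 2\lambda-1$; the only difference is that you do the count by induction on subtrees while the paper sums vertex degrees and uses $|E|=\lambda-q$, which is the same bookkeeping. Your side remarks---that the roots alone already witness the lemma as literally stated (the $2\lambda-1$ refinement is really needed for the weight-constancy used in Theorem~\ref{theorem:number-of-intvl-01}), and the nested-chain example showing tightness---are correct and are useful additions not present in the paper.
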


\begin{proof}We say interval $[a_i,b_i]$ is a parent of
interval $[a_j,b_j]$ if $[a_i,b_i]$ is the smallest interval
containing $[a_j,b_j]$. We also say interval $[a_j,b_j]$ is a child
of interval $[a_i,b_i]$. Since the intervals $[a_i,b_i]$ are either
pairwise disjoint or contain each other, each interval has at most
one parent. Therefore, the set of the intervals $[a_i,b_i]$ form a
forest in which each vertex is associated with an interval, see
Figure \ref{fig:interval-tree}. For each interval $[a_i,b_i]$, let
$I_i$ be the set of the maximal intervals that are contained in
interval $[a_i,b_i]$ but disjoint with all of its children. For
example, if $[a_i,b_i] = [1,20]$ and its child intervals are $[3,5],
[7,8]$ and $[12,15]$, then $I_i = \{[1,2],[6,6],[9,11],[16,20]\}$.
In the case the interval $[a_i,b_i]$ is a leaf interval (an interval
does not have any children), $I_i$ is simply the set containing only
interval $[a_i,b_i]$. Let $A = \bigcup_{1 \leq i \leq {\lambda}}
I_i$. We will show that $A$ is the set of the disjoint intervals
$[a'_i,b'_i]$ satisfying the condition of the lemma.

Firstly, we show that $\bigcup_{1 \leq i \leq {\lambda}}[a_i,b_i] =
\bigcup_{[a'_i,b'_i] \in A}[a'_i,b'_i]$. By the construction of
$I_i$, it is trivial to see that $\bigcup_{[a'_i,b'_i] \in
A}[a'_i,b'_i] \subseteq \bigcup_{1 \leq i \leq {\lambda}}[a_i,b_i]$.
Conversely, it is enough to show that $[a_i,b_i] \subseteq
\bigcup_{[a'_i,b'_i] \in A}[a'_i,b'_i]$ for any $1 \leq i \leq
{\lambda}$. This can be proved by induction on vertices in each tree
of the forest. For the base case, obviously each leaf interval
$[a_i,b_i]$ is in $A$. Therefore, $[a_i,b_i] \subseteq
\bigcup_{[a'_i,b'_i] \in A}[a'_i,b'_i]$ for any leaf interval
$[a_i,b_i]$. For any internal interval $[a_i,b_i]$, assume that all
of its child intervals are subsets of $\bigcup_{[a'_i,b'_i] \in
A}[a'_i,b'_i]$. By the construction of $I_i$, $[a_i,b_i]$ is a union
of all of the intervals in $I_i$ and all of its child intervals.
Therefore, $[a_i,b_i] \subseteq \bigcup_{[a'_i,b'_i] \in
A}[a'_i,b'_i]$.

Secondly, we show that the intervals in $A$ are pairwise disjoint.
It is sufficient to show that any interval in $I_i$ is disjoint with
every interval in $I_j$ for $i \neq j$. Obviously, the statement is
true if $[a_i,b_i] \cap [a_j,b_j] = \emptyset$. Let us consider the
case where one contains the other. Without loss of generality, we assume
that $[a_j,b_j] \subset [a_i,b_i]$. Consider two cases: \\
\textbf{Case 1:} $[a_i,b_i]$ is the parent of $[a_j,b_j]$. By the
construction of $I_i$, any interval in $I_i$ is disjoint with
$[a_j,b_j]$. By the construction of $I_j$, any interval in $I_j$ is
contained in $[a_j,b_j]$. Therefore, they are disjoint. \\
\textbf{Case 2:} $[a_i,b_i]$ is not the parent of $[a_j,b_j]$. Let
$[a_j,b_j] = [a_{i_0},b_{i_0}] \subset [a_{i_i},b_{i_i}] \dots
\subset [a_{i_h},b_{i_h}] = [a_i,b_i]$, where
$[a_{i_{t}},b_{i_{t}}]$ is the parent of
$[a_{i_{t-1}},b_{i_{t-1}}]$. From the result in the Case 1, any
interval in $I_{i_t}$ is disjoint with $[a_{i_{t-1}},b_{i_{t-1}}]$
for $1 \leq t \leq h$. So any interval in $I_i$ is disjoint with
$[a_j,b_j]$. We already know that any interval in $I_j$ is contained
in $[a_j,b_j]$. Thus, they are disjoint.

Finally, we show that the number of intervals in $A$ is no more than
$2{\lambda} - 1$. We have $|A| = \sum_{i=1}^{{\lambda}} |I_i|$. It
is easy to see that the number of intervals in $I_i$ is no more than
the number of children of $[a_i,b_i]$ plus one, which is equal to
the degree of the vertex associated with $[a_i,b_i]$ if the vertex
is not a root of a tree in the forest, and equal to the degree of
the vertex plus one if the vertex is a root. Let $q$ be the number
of trees in the forest. Then, $|A|=\sum_{i=1}^{{\lambda}} |I_i| \leq
\sum_{i=1}^{{\lambda}}d_i + q = 2|E| + p$, where $d_i$ is the degree
of the vertex associated with $[a_i,b_i]$ and $E$ is the set of the
edges of the forest. We know that in a tree the number of edges is
equal to the number of vertices minus one. Thus, $|E| = {\lambda} -
q$. Therefore, $|A| \leq 2{\lambda} - q \leq 2{\lambda} - 1$. This
completes our proof.
\end{proof}

\begin{figure}[h]
\begin{center}
  \includegraphics[width=3.2in]{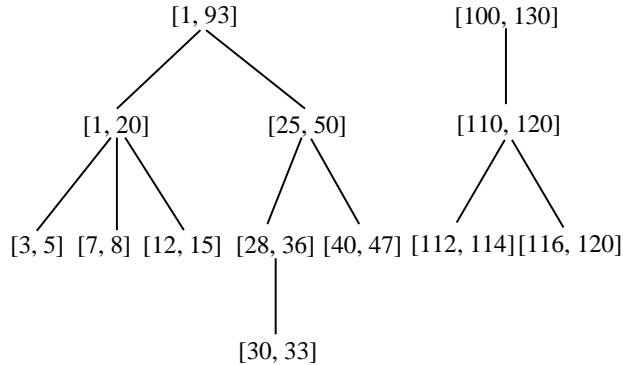}\\
  \caption{A forest illustration in the proof of Lemma \ref{lemma:number-of-intvl}.}\label{fig:interval-tree}
\end{center}
\end{figure}

From the proof, an algorithm computing the disjoint intervals is
straightforward by first constructing the interval forest. Once the
forest is built, outputting the disjoint intervals can be done
easily at each vertex. However, designing a fast algorithm for
constructing the forest is not trivial. In the subsection
\ref{sec:disjoint-intvl}, we will discuss an $O({\lambda}\log
{\lambda})$-time algorithm for constructing the forest. Thereby,
there is an $O({\lambda}\log {\lambda})$-time algorithm for
computing the disjoint intervals $[a'_i,b'_i]$ in Lemma
\ref{lemma:number-of-intvl}, given ${\lambda}$ intervals satisfying
Property \ref{property:disjoint-or-containing-intvl}. Also, from
Property \ref{property:disjoint-or-containing-intvl} and Lemma
\ref{theorem:number-of-intvl-01}, it is not hard to prove the
following theorem.

\begin{theorem}$OutNeigh(v_i)$ is the union of at most $2\lambda - 1$
disjoint intervals. Formally, $OutNeigh(v_i) = \bigcup_{1 \leq m
\leq p} [a_{m}, b_{m}]$ where $p \leq 2\lambda - 1$, $[a_{m}, b_{m}]
\bigcap [a_{m'}, b_{m'}] = \emptyset$ for $1 \leq m \neq m' \leq p$.
Furthermore, $\omega(v_i,v_j) = \omega(v_i,v_k)$ for any $1 \leq m
\leq p$ and for any $v_i,v_k \in
[a_m,b_m]$.\label{theorem:number-of-intvl-01}
\end{theorem}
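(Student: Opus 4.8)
The plan is to combine the earlier results directly. By Property \ref{property:out-neigh-overlap}, we have $OutNeigh(v_i) = \bigcup_{1 \leq \omega \leq \lambda} PREFIX(s_i[\omega+1, |s_i|])$, which expresses the out-neighborhood as a union of at most $\lambda$ intervals. The first step is to verify that these $\lambda$ intervals satisfy the hypothesis of Lemma \ref{lemma:number-of-intvl}, namely Property \ref{property:disjoint-or-containing-intvl}. To do this, I would observe that the strings $s_i[\omega+1, |s_i|]$ for $1 \leq \omega \leq \lambda$ are exactly the suffixes of $s_i$ of lengths $|s_i| - 1, |s_i| - 2, \dots, |s_i| - \lambda$, and these suffixes are strictly nested: the shorter one is always a suffix, but more to the point for Property \ref{property:disjoint-or-containing-intvl}, any two of them have distinct lengths. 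Hence for any two such strings $x = s_i[\omega+1,|s_i|]$ and $y = s_i[\omega'+1,|s_i|]$ with $\omega < \omega'$ we have $|y| < |x|$, so Property \ref{property:disjoint-or-containing-intvl} applies and tells us $PREFIX(y) \subseteq PREFIX(x)$ or $PREFIX(y) \cap PREFIX(x) = \emptyset$. Thus the collection of intervals is pairwise either nested or disjoint, exactly the hypothesis of the lemma.

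The second step is to apply Lemma \ref{lemma:number-of-intvl} to these $\lambda$ intervals. The lemma immediately yields $p \leq 2\lambda - 1$ pairwise disjoint intervals $[a_m, b_m]$ whose union equals $\bigcup_{1 \leq \omega \leq \lambda} PREFIX(s_i[\omega+1,|s_i|]) = OutNeigh(v_i)$. This gives the first half of the theorem. (One small caveat to address: some of the $\lambda$ intervals may be empty when $PREFIX(s_i[\omega+1,|s_i|]) = \emptyset$; such empty sets contribute nothing to the union and can simply be discarded before invoking the lemma, leaving at most $\lambda$ genuine intervals, so the bound $2\lambda - 1$ still holds.)

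The third step, and the part requiring the most care, is the uniform-weight claim: that $\omega(v_i, v_j) = \omega(v_i, v_k)$ for all $v_j, v_k$ lying in a single disjoint interval $[a_m, b_m]$. The key observation is the explicit construction of the disjoint intervals in the proof of Lemma \ref{lemma:number-of-intvl}: each final interval $[a_m, b_m]$ is a maximal subinterval of some parent interval $PREFIX(s_i[\omega+1,|s_i|])$ that avoids all of that parent's children. I would argue that every vertex $v_j$ in such a maximal subinterval has the same overlap weight, because the weight $\omega(v_i, v_j)$ is determined by the \emph{longest} suffix $s_i[\omega+1,|s_i|]$ that is a prefix of $s_j$; equivalently $\omega(v_i,v_j)$ is the smallest $\omega$ such that $v_j \in PREFIX(s_i[\omega+1,|s_i|])$. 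Within a single maximal subinterval that belongs to the parent $PREFIX(s_i[\omega+1,|s_i|])$ but to none of its children (which correspond to longer suffixes of $s_i$, i.e. smaller $\omega$), every vertex has the same minimal such $\omega$, hence the same weight. This is why the construction groups exactly those vertices whose maximal overlap with $s_i$ is a fixed length.

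The main obstacle will be making the weight argument fully rigorous, since it requires reconciling the direction of the nesting with the direction of the weight: longer overlaps (smaller weight $\omega$) correspond to longer suffix strings $s_i[\omega+1,|s_i|]$, which by Property \ref{property:disjoint-or-containing-intvl} correspond to smaller (more deeply nested) intervals, i.e. children in the forest. So a vertex's true weight is governed by the deepest interval in the forest that contains it, and a maximal subinterval $[a_m,b_m]$ is precisely the set of vertices whose deepest containing interval is a fixed node of the forest. I would state this correspondence carefully — identifying the depth of $v_j$ in the forest with $\lambda - \omega(v_i,v_j) + 1$ or the analogous indexing — and then the constancy of the weight on each $[a_m,b_m]$ follows. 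Once this correspondence is pinned down, the theorem follows by assembling the three steps.
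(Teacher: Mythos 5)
Your proposal is correct and follows exactly the route the paper intends: it invokes Property \ref{property:out-neigh-overlap} to write $OutNeigh(v_i)$ as a union of at most $\lambda$ intervals $PREFIX(s_i[\omega+1,|s_i|])$, checks via Property \ref{property:disjoint-or-containing-intvl} that these are pairwise nested or disjoint, and applies Lemma \ref{lemma:number-of-intvl} (the paper itself gives no explicit proof beyond saying this combination suffices). You also correctly supply the weight-constancy argument the paper leaves implicit --- that $\omega(v_i,v_j)$ is the least $\omega$ with $v_j \in PREFIX(s_i[\omega+1,|s_i|])$, that strictly smaller intervals in the forest necessarily carry strictly smaller $\omega$, and hence that this minimum is the same for every vertex of a maximal subinterval avoiding all children of its node; only your side remark identifying the depth of $v_j$ with $\lambda-\omega(v_i,v_j)+1$ is inessential and not quite right (the set of indices $\omega$ with $v_j \in PREFIX(s_i[\omega+1,|s_i|])$ need not be a contiguous range), but nothing in the proof depends on it.
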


Theorem \ref{theorem:number-of-intvl-01} suggests a way of
storing $OutNeigh(v_i)$ by at most $(2\lambda - 1)$ disjoint
intervals. Each interval takes $2\lceil \log n \rceil$ bits to
store its lower bound and its upper bound, and $\lceil \log \lambda
\rceil$ bits to store the weight. Thus, we need $2\lceil \log n
\rceil + \lceil \log \lambda \rceil$ to store each interval.
Therefore, it takes at most $(2\lambda - 1)(2\lceil \log n \rceil +
\lceil \log \lambda \rceil)$ bits to store each $OutNeigh(v_i)$.
Overall, we need $(2\lambda - 1)(2\lceil \log n \rceil + \lceil \log
\lambda \rceil)n$ bits to store the exact-match overlap graph. Of
course, the disjoint intervals of each $OutNeigh(v_i)$ are stored in
the sorted order of their lower bounds. Therefore, the operation of
accessing an edge $(v_i,v_j)$ can be easily done in $O(\log
\lambda)$ time by using binary search.

\section{Algorithms for constructing the compact data
structure} \label{sec:construct-data-struct} In this section, we
describe two algorithms for constructing the data structure representing
the exact-match overlap graph. The run time of the first
algorithm is $O(\lambda\ell n\log n)$ and it only uses $O(\lambda)$
extra memory, besides $\ell n \lceil\log|\Sigma|\rceil$ bits memory
used to store the $n$ input strings. The second algorithm runs in
$O(\lambda\ell n)$ time and requires $O(n)$ extra memory. As shown
in Section \ref{sec:data-struct-ov-graph}, the algorithms need two
routines. The first routine computes $PREFIX(x)$ and the second one
computes the disjoint intervals described in Lemma
\ref{lemma:number-of-intvl}.

\subsection{Computing interval $PREFIX(x)$} \label{sec:comp-PREFIX}
In this subsection, we consider the problem of computing the interval
$PREFIX(x)$, given a string $x$ and $n$ input strings
$s_1,s_2,\dots, s_n$ of the same length $\ell$ in lexicographical order. We
describe two algorithms for this problem. The first algorithm takes
$O(|x|\log n)$ time and $O(1)$ extra memory. The second algorithm
runs in $O(|x|)$ time and requires $O(n)$ extra memory.

\subsubsection{A binary search based algorithm} \label{sec:algo-PREFIX-BSearch}
Let $[a_i,b_i] = PREFIX(x[1,i])$ for $1 \leq i \leq |x|$. It is easy
to see that $PREFIX(x) = [a_{|x|},b_{|x|}] \subseteq [a_{|x| -
1},b_{|x| - 1}] \subseteq \dots \subseteq [a_1,b_1]$. Consider the
following input strings, for example.
\begin{eqnarray}
s_1 & = & AAACCGGGGTTT \nonumber \\
s_2 & = & ACCAGAATTTGT \nonumber \\
s_3 & = & ACCATGTGGTAT \nonumber \\
s_4 & = & ACGGGCTTTCCA \nonumber \\
s_5 & = & ACTAAGGAATTT \nonumber \\
s_6 & = & TGGCCGAAGAAG \nonumber \\
x & = & ACCA \nonumber
\end{eqnarray}
Then, $[a_1,b_1] = [1,5], [a_2,b_2] = [2,5], [a_3,b_3] = [2,3]$ and
$PREFIX(x) = [a_4,b_4] = [2,3]$.

We will find $[a_{i},b_{i}]$ from $[a_{i-1},b_{i-1}]$ for $i$ from
$1$ to $|x|$, where $[a_0,b_0]$ = [1,n] initially. Thereby,
$PREFIX(x)$ is computed. Let $Col_i$ be the string that consists of
all the symbols at position $i$ of the input strings. In the above
example, $Col_3 = ACCGTG$. Observe that the symbols in string
$Col_i[a_{i-1},b_{i-1}]$ are in lexicographical order for $1 \leq i \leq
|x|$. Thus, any symbol in the string $Col_i[a_{i-1},b_{i-1}]$ appears
consecutively. Another observation is that $[a_i,b_i]$ is the
interval where the symbol $x[i]$ appears consecutively in string
$Col_i[a_{i-1},b_{i-1}]$. Therefore, $[a_{i},b_{i}]$ is determined
by searching for the symbol $x[i]$ in the string $Col_i[a_{i-1},b_{i-1}]$. This
can be done easily by first using the binary search to find a
position in the string $Col_i[a_{i-1},b_{i-1}]$ where the symbol $x[i]$
appears. If the symbol $x[i]$ is not found, we return the empty interval and
stop. If the symbol $x[i]$ is found at position $c_i$, then $a_i$
(respectively $b_i$) can be determined by using the double search routine
in string $Col_i[a_{i-1},c_i]$ (resp. string $Col_i[c_i,b_{i-1}]$)
as follows. We consider the symbols in the string $Col_i[a_{i-1},c_i]$ at
positions $c_i - 2^0, c_i - 2^1,\dots,c_i - 2^k, a_{i-1}$, where
$k=\lfloor\log(c_i - a_{i-1})\rfloor$. We find $j$ such that the symbol
$Col_i[c_i - 2^j]$ is the symbol $x[i]$ but the symbol $Col_i[c_i -
2^{j+1}]$ is not. Finally, $a_i$ is determined by using binary
search in string $Col_i[c_i - 2^j, c_i - 2^{j+1}]$.  Similarly,
$b_i$ is determined. The pseudo-code is given as follows.

\begin{algorithmic}[1]
\STATE {Initialize $[a_0,b_0] = [1,n]$.}
\FOR {$i=1$ to $|x|$}
    \STATE {Find the symbol $x[i]$ in the string $Col_i[a_{i-1},b_{i-1}]$ using binary search.}
    \IF {the symbol $x[i]$ appears in the string $Col_i[a_{i-1},b_{i-1}]$}
        \STATE {Let $c_i$ be the position of the symbol $x[i]$ returned by the binary search.}
        \STATE {Find $a_i$ by double search and then binary search in the string $Col_i[a_{i-1},c_i]$.}
        \STATE {Find $b_i$ by double search and then binary search in the string $Col_i[c_i,b_{i-1}]$.}
    \ELSE
        \STATE{Return the empty interval $\emptyset$.}
    \ENDIF
\ENDFOR
\STATE{Return the interval $[a_{|x|},b_{|x|}]$.}
\end{algorithmic}

\emph{Analysis:} As we discussed above, it is easy to see the
correctness of the algorithm. Let us analyze the memory and time
complexity of the algorithm. Since the algorithm only uses binary
search and double search, it needs $O(1)$ extra memory. For time
complexity, it is easy to see that computing the interval $[a_i,b_i]$ at
step $i$ takes $O(\log(b_{i-1} - a_{i-1})) \leq O(\log n)$ time
because both binary search and double search take $O(\log(b_{i-1} -
a_{i-1}))$ time. Overall, the algorithm takes $O(|x|\log n)$
time because there are at most $|x|$ steps.

\subsubsection{A trie-based algorithm} \label{sec:algo-PREFIX-Trie}

As we have seen in Subsection \ref{sec:algo-PREFIX-BSearch}, to
compute the interval $[a_i,b_i]$ for symbol $x[i]$, we use binary search
to find the symbol $x[i]$ in the interval $[a_{i-1},b_{i-1}]$. The
binary search takes $O(\log (b_{i-1} -a_{i-1})) \leq O(\log
n)$ time. We can reduce the $O(\log n)$ factor to $O(1)$ in
computing the interval $[a_i,b_i]$ by pre-computing all of the
intervals for each symbol in the alphabet $\Sigma$ and store them in
a trie. Given the symbol $x[i]$, to find the interval $[a_i,b_i]$ we just
retrieve it from the trie, which takes $O(1)$ time. The trie is
defined as follows (see Figure \ref{fig:trie}). At each node in the
trie, we store a symbol and its interval. Observe that we do not
have to store the nodes that have only one child. These nodes form
chains in the trie. We will remove such chains and store their
lengths in each remaining node. As a result, each internal node in
the trie has at least two children. Because each internal node has
at least two children, the number of nodes in the trie is no more
than twice the number of leaves, which is equal to $2n$. Therefore,
we need $O(n)$ memory to store the trie. Also, it is well-known that
the trie can be constructed recursively in $O(\ell n)$ time.

\begin{figure}[h]
\begin{center}
  \includegraphics[width=3in]{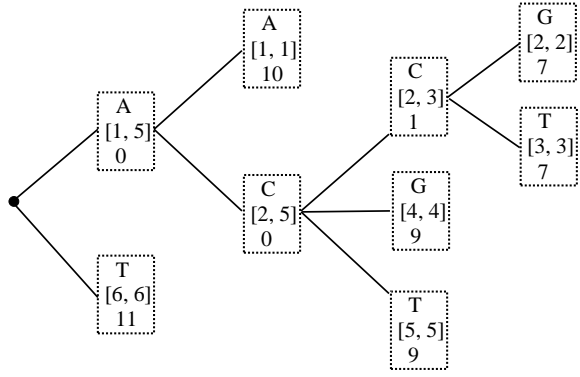}\\
  \caption{An illustration of a trie for the example input strings in Subsection \ref{sec:algo-PREFIX-BSearch}.}\label{fig:trie}
\end{center}
\end{figure}

It is easy to see that once the trie is constructed, the task of
finding the interval $[a_i,b_i]$ for each symbol $x[i]$ takes $O(1)$
time. Therefore, computing $PREFIX(x)$ will take $O(|x|)$
time.

\subsection{Computing the disjoint intervals} \label{sec:disjoint-intvl}
In this subsection, we consider the problem of computing the maximal
disjoint intervals, given $k$ intervals
$[a_1,b_1],[a_2,b_2],$
$\dots,[a_k,b_k]$ which either are pairwise
disjoint or contain each other. As discussed in Section
\ref{sec:data-struct-ov-graph}, it is sufficient to build the forest
of the $k$ input intervals. Once the forest is built, outputting the
maximal disjoint intervals can be done easily at each vertex of the
forest.

The algorithm for the problem is described as follows. First we sort
the input intervals in non-decreasing order of their lower bounds
$a_i$. Among those intervals whose lower bounds are equal, we sort
them in decreasing order of their upper bounds $b_i$. So after this
step, we have 1) $a_1 \leq a_2 \leq \dots \leq a_k$ and 2) if $a_i =
a_j$ then $b_i > b_j$ for $1 \leq i < j \leq k$. Since the input
intervals either are pairwise disjoint or contain each other, there
are only two possibilities happening to two intervals $[a_i,b_i]$
and $[a_{i+1},b_{i+1}]$ for $1 \leq i < k $. Either $[a_i,b_i]$
contains $[a_{i+1},b_{i+1}]$ or they are disjoint. Observe that if
$[a_i,b_i]$ contains $[a_{i+1},b_{i+1}]$, then $[a_i,b_i]$ is
actually the parent of $[a_{i+1},b_{i+1}]$. If they are disjoint,
then the parent of $[a_{i+1},b_{i+1}]$ is the smallest ancestor of
$[a_i,b_i]$ that contains $[a_{i+1},b_{i+1}]$. If such an ancestor does
not exist, then $[a_{i+1},b_{i+1}]$ does not have a parent. Let
$A_i=\{[a_{i_1},b_{i_1}],\dots,[a_{i_m},b_{i_m}]\}$ be the set of
ancestors of $[a_i,b_i]$, where $i_1 < \dots <i_m$. It is easy to
see that $[a_{i_1},b_{i_1}] \subset \dots \subset
[a_{i_m},b_{i_m}]$. Therefore, the smallest ancestor of $[a_i,b_i]$
that contains $[a_{i+1},b_{i+1}]$ can be found by binary search,
which takes at most $O(\log k)$ time. Furthermore, assume that
$[a_{i_j},b_{i_j}]$ is the smallest ancestor, then the set of
ancestors of $[a_{i+1},b_{i+1}]$ is $A_{i+1} =
\{[a_{i_1},b_{i_1}],\dots,[a_{i_j},b_{i_j}]\}$. Based on these
observations, the algorithm can be described by the following
pseudo-code.
\begin{algorithmic}[1]
\STATE {Sort the input intervals $[a_i,b_i]$ as described above.}

\STATE {Initialize $A = \emptyset$. /* \emph{$A$ is the set of
ancestors of current interval $[a_i,b_i]$} */}

\FOR {$i=1$ to $k - 1$}
    \IF {$[a_i,b_i]$ contains $[a_{i+1},b_{i+1}]$}
        \STATE {Output $[a_i,b_i]$ is the parent of $[a_{i+1},b_{i+1}]$.}
        \STATE {Add $[a_{i+1},b_{i+1}]$ into $A$.}
    \ELSE
        \STATE{Assume that $A = \{[a_{i_1},b_{i_1}],\dots,[a_{i_m},b_{i_m}]\}$.}
        \STATE{Find the smallest interval in $A$ that contains $[a_{i+1},b_{i+1}]$.}
        \IF {the smallest interval is found}
            \STATE{Assume that the smallest interval is $[a_{i_j},b_{i_j}]$.}
            \STATE {Output $[a_{i_j},b_{i_j}]$ is the parent of $[a_{i+1},b_{i+1}]$.}
            \STATE {Set $A=\{[a_{i_1},b_{i_1}],\dots,[a_{i_j},b_{i_j}],[a_{i+1},b_{i+1}]\}$.}
        \ELSE
            \STATE {Set $A=\{[a_{i+1},b_{i+1}]\}$.}
        \ENDIF
    \ENDIF
\ENDFOR

\end{algorithmic}

\emph{Analysis:} As we argued above, the algorithm is correct. Let
us analyze the run time of the algorithm. Sorting the input
intervals takes $O(k)$ time by using integer sort since the lower
bounds are integers. It is easy to see that finding the smallest
interval from the set $A$ dominates the running time at each step of
the loop, which takes $O(\log k)$ time. Obviously, there are $k$
steps so the run time of the algorithm is $O(k\log k)$ overall.

\subsection{Algorithms for constructing the compact data
structure} \label{sec:algo-data-struct}

In this subsection, we describe two complete algorithms constructing
the data structure. The algorithms will use the routines in
subsection \ref{sec:comp-PREFIX} and subsection
\ref{sec:disjoint-intvl}. The only difference between these two algorithms
is the way of computing $PREFIX$. The first algorithm uses the
routine based on binary search to compute $PREFIX$, meanwhile, the
second one uses the trie-based routine. The following pseudo code
describes the first algorithm.
\begin{algorithmic}[1]

\FOR {$i=1$ to $n$}

    \FOR {$j=2$ to $\lambda + 1$}

        \STATE{Compute $PREFIX(s_i[j,|s_i|])$ by the routine based on binary search in Subsection \ref{sec:algo-PREFIX-BSearch}.}

    \ENDFOR

    \STATE{Output the disjoint intervals from the input intervals $PREFIX(s_i[2,|s_i|]),\dots,PREFIX(s_i[\lambda+1,|s_i|])$ by using the routine in Subsection \ref{sec:disjoint-intvl}.}

\ENDFOR

\end{algorithmic}

Let us analyze the time and memory complexity of the first
algorithm. Each computation of $PREFIX$ in line 3 takes
$O(\ell \log n)$ time and $O(1)$ extra memory. So the loop of
line 2 takes $O(\lambda \ell \log n)$ time and $O(\lambda)$ extra
memory. Computing the disjoint intervals in line 5 takes
$O(\lambda \log \lambda)$ time and $O(\lambda)$ extra memory. Since
$\lambda \leq \ell$, the run time of the loop 2 dominates the
run time of each step of loop 1. Therefore, the algorithm
takes $O(\lambda \ell n\log n)$ time and $O(\lambda)$ extra memory
in total.

The second algorithm is described by the same pseudo code above
except for the line 4 where the routine in Subsection
\ref{sec:algo-PREFIX-BSearch} computing $PREFIX(s_i[j,|s_i|])$ is
replaced by the trie-base routine in Subsection
\ref{sec:algo-PREFIX-Trie}. Let us analyze the second algorithm.
Computing $PREFIX$ in line 4 takes $O(\ell)$ time instead of $O(\ell
\log n)$ as in the first algorithm. With a similar analysis to that
of the first algorithm, the loop of line 2 takes $O(\lambda \ell n)$
time and $O(\lambda)$ extra memory. Constructing the trie in line 1
takes $O(\ell n)$ time. Therefore, the algorithm runs in $O(\lambda
\ell n)$ time. We also need $O(n)$ extra memory to store the trie.
In many cases, $n$ is much larger than $\lambda$. So the algorithm
takes $O(n)$ extra memory.

\section{Conclusions} \label{sec:conclusion}
We have described a memory efficient data structure that represents
the exact-match overlap graph. We have shown that this data
structure needs at most $(2\lambda - 1)(2\lceil \log n \rceil +
\lceil \log \lambda \rceil) n$ bits, which is a surprising result
because the number of edges in the graph can be $\Omega(n^2)$. Also,
it takes $O(\log \lambda)$ time to access an edge through the data
structure. We have proposed two fast algorithms to construct the
data structure. The first algorithm is based on binary search and
runs in $O(\lambda\ell n\log n)$ time and takes $O(\lambda)$ extra
memory. The second algorithm, based on the trie, runs in
$O(\lambda\ell n)$ time, which is slightly faster than the first
algorithm, but it takes $O(n)$ extra memory to store the trie. The
nice thing about the first algorithm is that the memory it uses is
mostly the memory of the input strings. This feature is very crucial
for building an efficient DNA assembler. Speaking of DNA assembly,
our data structure will definitely help building a DNA assembler
that can handle very large scale datasets. In the future, we would
like to exploit our data structure to speed up some operations on
the exact-match overlap graphs that are commonly used in a DNA
assembler such as removing transitive edges, greedily walking on the
graph, extracting all of the chains, etc.

\section{Acknowledgements} \label{sec:ack}
The authors would like to thank Vamsi Kundeti for discussions. The
authors also would like to thank SODA reviewers for many helpful
comments.

\bibliographystyle{alpha}
\bibliography{paper}

\end{document}